\documentclass{article}
\usepackage{graphicx} 
\usepackage[utf8]{inputenc}
\usepackage{amsmath, amssymb, amsthm}
\usepackage{geometry}
\usepackage{authblk}
\usepackage{tikz}
\usetikzlibrary{positioning, shapes.geometric, arrows.meta, calc}
\usepackage{hyperref}

\theoremstyle{plain}
\newtheorem{theorem}{Theorem}[section]
\newtheorem{lemma}[theorem]{Lemma}

\newtheorem{proposition}[theorem]{Proposition}
\newtheorem{corollary}[theorem]{Corollary}
\newtheorem{innerclaim}{Claim}
\newenvironment{claim}
  {\begin{innerclaim}}
  {\end{innerclaim}}
\usepackage{enumitem}

\theoremstyle{definition}
\newtheorem{definition}{Definition}[section]

\newtheorem{example}{Example}[section]

\theoremstyle{remark}

\usepackage{todonotes}
\setuptodonotes{inline}

\title{The Minimum-Weight Mixed Dominating Set on Threshold Graphs}
\author[1]{Emiliano Lancini%
  \thanks{\href{mailto:emiliano.lancini@lamsade.dauphine.fr}
  {\nolinkurl{emiliano.lancini@lamsade.dauphine.fr}}}}

\author[2]{Oulin Yang%
  \thanks{\href{mailto:oulin_yang@outlook.com}
  {\nolinkurl{oulin_yang@outlook.com}}}}

\affil[1]{LAMSADE, CNRS UMR 7243,
Université Paris Dauphine--PSL, Paris, France}

\affil[2]{Institut Polytechnique de Paris,
Palaiseau, France}

\date{August 2026}
\begin{document}
	
	\maketitle
	
	\begin{abstract}
		We study the minimum-weight mixed dominating set problem on threshold graphs. In this problem, vertices and edges have weights, and the goal is to find a mixed set of minimum total weight that dominates every vertex and edge of the graph. We first show that arbitrary weights can be reduced to non-negative weights without changing the asymptotic running time. By adapting a reduction to the minimum-weight edge cover given in Ferrarini, Kober, Lancini, and Yuditsky~\cite{ferrariniKoberLanciniYuditsky}, we obtain an $\mathcal{O}(n^5)$-time algorithm for the minimum weight mixed dominating set problem on threshold graphs.
	\end{abstract}
	
	\section{Introduction}
	
	
	The domination problem in graph theory asks for a minimum-size vertex subset such that every vertex outside the subset has a neighbor in it. In this paper, we study the \emph{mixed dominating set problem}. A mixed dominating set is a subset of the vertices and edges of a graph that dominates every vertex and edge. A vertex dominates itself, its neighbors, and its incident edges; an edge dominates itself, its endpoints, and every edge sharing an endpoint with it.  In particular, we are interested in the \emph{Minimum Weight Mixed Dominating Set Problem} (MWMDS), where each vertex and edge is assigned a weight, and the objective is to find the mixed dominating set of minimum the total weight.
    
	Domination problems find application in various fields. For instance, mixed domination can be applied to monitoring of sensible areas and energy and telecommunication networks.
	
	For example, the weighted problem has an intuitive application in urban security and patrol planning. Consider a city road network modeled as a graph, where intersections are vertices and streets are edges. Fixed guard booths can be placed at selected intersections, while patrol cars can be assigned to selected streets. Vertex and edge weights represent the corresponding deployment costs. A minimum-weight mixed dominating set then gives a lowest-cost plan in which every intersection and street is monitored by at least one selected booth or patrol route.
	
	\subsection{Related Work}
	
	The Mixed Dominating Set problem, also known as the total cover problem, was introduced by Alavi, Behzad, Lesniak-Foster, and Nordhaus~\cite{alavi1977}. It was later studied in several works on total covers and total matchings~\cite{alavi1992,erdosmeir1977,meir1978,peledsun1994}. The problem can be viewed as a mixture of domination, edge domination, vertex cover, and edge cover.
	
	From the algorithmic point of view, the problem is hard on general graphs. Majumdar~\cite{majumdar1993} showed that mixed domination is NP-complete. It remains NP-complete on some restricted graph classes, including split graphs and planar bipartite graphs of maximum degree four~\cite{zhao2011,manlove1999}. These negative results motivate the study of graph classes where the problem becomes tractable.
	
	On the positive side, polynomial-time algorithms are known for several special graph classes. The unweighted problem has been solved on trees, cactus graphs, generalized series-parallel graphs, and proper interval graphs~\cite{lanchang2013, majumdar1993,madathil2019,rajaati2018}. Total matchings and total coverings have also been studied directly on threshold graphs~\cite{peledsun1994}. 
    More generally, by standard extensions of Courcelle's Theorem~\cite{courcelle1990} to MSO-definable optimization problems~\cite{arnborg1991easy}, one obtains polynomial-time algorithms for the MWMDS on graphs of bounded treewidth.

	There is also work on approximation, exact exponential-time, and parameterized algorithms. Hatami~\cite{hatami2007} gave a 2-approximation algorithm. Jain, Jayakrishnan, Panolan, and Sahu~\cite{jain2017} studied the parameterized complexity of the problem. Xiao and Sheng~\cite{xiaosheng2020} improved algorithms and kernels for the solution-size parameter. Dublois, Lampis, and Paschos~\cite{dublois2021} gave improved exact and parameterized algorithms, including results for treewidth and pathwidth.
	
	The weighted version is less developed in the literature. Xiao~\cite{xiao2019} studied approximation bounds for a restricted weighted model, where all vertices have the same weight and all edges have the same weight. Recently, Ferrarini, Kober, Lancini, and Yuditsky~\cite{ferrariniKoberLanciniYuditsky} proved that the minimum weight mixed dominating set problem is polynomially solvable for cliques, complete $k$-partite graphs, and blow-ups of constant size graphs. Moreover, they proved that the problem is NP-hard on cographs.
	
	The unweighted problem is the special case in which every weight equals one. Consequently, NP-hardness of the unweighted problem implies NP-hardness of the weighted problem, while a polynomial-time algorithm for the weighted problem also solves the unweighted case.
	
	\subsection{Contribution}
	In this work, we study the minimum weight mixed dominating set problem for threshold graphs. We prove that this problem can be solved in polynomial time in the size of the instance, and we provide an algorithm running in $\mathcal{O}(n^5)$. This result extends the previous results of Peled and Sun~\cite{peledsun1994} that gave a polynomial algorithm for the unweighted case. Since the MDS is NP-hard for split graphs~\cite{lanchang2013}, and the MWMDS is NP-hard for cographs~\cite{ferrariniKoberLanciniYuditsky} we have no hope to find a polynomial algorithm for these two  superclasses\footnote{Under the hypothesis that P$\neq$NP.}.
	Our result implies that the MWMDS is polynomially solvable for complete graphs and complete split graphs, results that were already given in~\cite{ferrariniKoberLanciniYuditsky}.
	\section{Preliminaries}
	Throughout the paper, we will only consider connected graphs with at least two vertices.
	
	Let $G = (V, E)$ be an undirected graph, and let $w$ be a non-negative weight function on both edges and vertices of $G$. A \emph{mixed dominating set} (MDS) is a set $D = D_V \cup D_E$, where $D_V \subseteq V$ and $D_E \subseteq E$, such that every vertex $v \in V$ and every edge $e \in E$ is dominated by at least one element in $D$.
	\begin{itemize}
		\item A vertex $v \in D_V$ dominates itself, all its adjacent vertices, and all its incident edges.
		\item An edge $e \in D_E$ dominates itself, its endpoints, and all edges sharing an endpoint with $e$.
	\end{itemize}
	We study the weighted version, that is the \emph{minimum weight mixed dominating set} (MWMDS) problem.
	The objective of the MWMDS problem is to minimize $w(D) = \sum_{v \in D_V} w(v) + \sum_{e \in D_E} w(e)$, where $D$ is a mixed dominating set of $G$.

	Throughout the paper, we will use two concepts strictly related to domination, the vertex cover and the edge cover.

	\begin{definition}[Vertex Cover]
		A \emph{vertex cover} (VC) of a graph is a set of its vertices that includes at least one endpoint of every edge.
	\end{definition}

	\begin{definition}[Edge Cover]
		An \emph{edge cover} (EC) of a graph is a set of its edges such that every vertex of the graph is an endpoint of at least one edge of the set. 
	\end{definition}
	
	The concept of MDS generalizes both edge and vertex cover, as shown in the following.
	
	\begin{lemma}[Edge and Vertex Covers are MDS]\label{lem:MDSareVCorEC}
		Let $G=(V,E)$ be a connected graph. Then a subset of $E$ is an MDS if and only if it is an edge cover. Similarly, a subset of $V$ is an MDS if and only if it is a vertex cover.
	\end{lemma}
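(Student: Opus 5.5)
The plan is to prove the two equivalences separately. Each one reduces to unwinding the definition of domination for a set made of only one type of element. In each case I will look at which objects are able to dominate a vertex, and which are able to dominate an edge. Throughout I use the standing assumption that $G$ is connected with at least two vertices, so every vertex has at least one neighbour and $E\neq\emptyset$.

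\emph{Edge sets.} Let $D_E\subseteq E$.
\begin{itemize}
\item Suppose $D_E$ is an edge cover. Every vertex $v$ is an endpoint of some $e\in D_E$, and is therefore dominated by $e$. For an edge $uv\in E$, the endpoint $u$ lies on some $e\in D_E$. Then either $e=uv$, or $e$ shares the endpoint $u$ with $uv$. In both cases $uv$ is dominated. Hence $D_E$ is an MDS.
\item Conversely, suppose $D_E$ is an MDS. The only way an edge dominates a vertex is by having it as an endpoint. So every vertex is an endpoint of an edge of $D_E$, which means $D_E$ is an edge cover.
\end{itemize}

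\emph{Vertex sets.} Let $D_V\subseteq V$.
\begin{itemize}
\item Suppose $D_V$ is an MDS. An edge is dominated by a vertex only when that vertex is one of its endpoints. So every edge has an endpoint in $D_V$, and $D_V$ is a vertex cover.
\item Conversely, suppose $D_V$ is a vertex cover. Every edge has an endpoint in $D_V$, so every edge is dominated. Now take a vertex $v$. If $v\in D_V$ it dominates itself. Otherwise, connectivity and $|V|\ge 2$ give a neighbour $u$ of $v$. Since $v\notin D_V$ and the edge $uv$ is covered, we get $u\in D_V$, and $u$ dominates $v$.
\end{itemize}

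The argument involves no real difficulty. The one point needing care is the last step: a vertex cover might fail to dominate an isolated vertex outside it. This is exactly where the hypothesis that $G$ is connected with at least two vertices is used.
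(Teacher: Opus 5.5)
Your proposal is correct and follows essentially the same route as the paper. Both arguments unwind, for edge-only and vertex-only sets, which elements can dominate a vertex or an edge, and both invoke the standing assumption that $G$ is connected with at least two vertices at exactly one point: to give each vertex outside the vertex cover a neighbour, which must then lie in $D_V$.
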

	\begin{proof}
		If $D_E$ is a mixed dominating set containing only edges, every vertex must be incident to an edge of $D_E$, so $D_E$ is an edge cover. Conversely, if $D_E$ is an edge cover, every vertex is dominated by an incident selected edge, and every edge shares an endpoint with a selected edge; hence $D_E$ is a mixed dominating set.
		
		If $D_V$ is a mixed dominating set containing only vertices, every edge must have a selected endpoint, so $D_V$ is a vertex cover. Conversely, suppose that $D_V$ is a vertex cover. Every edge is dominated by a selected endpoint. Every vertex outside $D_V$ has a neighbor because $G$ is connected and has at least two vertices, and all of its neighbors must lie in $D_V$; hence every vertex is also dominated.
	\end{proof}

    \begin{definition}[Touched vertex]
        Let $S$ be a mixed dominating set of a graph $G$. We say that a vertex $v$ is \emph{touched} by $S$ if either $v\in S$ or $\delta(v)\cap S\neq \varnothing$. In other words, a vertex is touched if either itself or an edge incident to it is in this mixed set. 
    \end{definition}
    
	The set of touched vertices characterizes when a mixed set dominates all the edges of a graph. Indeed, we have the following.	
	\begin{lemma}[Edge Domination Equivalence] \label{lem:edge_dom}
		Let $G=(V,E)$ be a graph. A mixed set $S$ dominates all edges of $G$ if and only if the set of touched vertices of $G$ is a vertex cover. 
	\end{lemma}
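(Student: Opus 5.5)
The plan is to unfold the definitions of domination and touching on a single edge. The key step is a local characterization: an edge $e=uv$ is dominated by $S$ if and only if at least one of $u,v$ is touched by $S$. Once this is established, the lemma follows by quantifying over all edges. The set $T$ of touched vertices meets every edge exactly when $T$ is a vertex cover, so $S$ dominates all edges if and only if $T$ is a vertex cover.

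To prove the local characterization, I would list the elements of $S$ that can dominate $e=uv$. By the definition of mixed domination, there are three possibilities:
\begin{itemize}
\item a vertex of $S$ that is an endpoint of $e$, i.e.\ $u\in S$ or $v\in S$;
\item the edge $e$ itself, if $e\in S$;
\item an edge of $S$ sharing an endpoint with $e$, i.e.\ some $f\in S$ with $f\in\delta(u)\cup\delta(v)$.
\end{itemize}
Note that $e$ itself lies in $\delta(u)\cap\delta(v)$. Hence the union of the three conditions reads: $u\in S$, or $v\in S$, or $\delta(u)\cap S\neq\varnothing$, or $\delta(v)\cap S\neq\varnothing$. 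Grouping by endpoint, this says exactly that $u$ is touched or $v$ is touched.

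For the forward direction, suppose $e$ is dominated. Whichever of the three cases witnesses this directly exhibits a touched endpoint. For the converse, suppose $u$ is touched. If $u\in S$, the vertex $u$ dominates its incident edge $e$. Otherwise some $f\in\delta(u)\cap S$ exists. Either $f=e$, and $e$ dominates itself, or $f$ shares the endpoint $u$ with $e$ and so dominates $e$.

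There is no real obstacle. The only point needing care is to count $e\in S$ as touching both of its endpoints, which the definition handles because $e\in\delta(u)$. A further remark is that neither connectivity nor vertex domination plays any role here, so the argument holds for arbitrary graphs, as the statement allows.
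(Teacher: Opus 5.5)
Your proposal is correct and follows essentially the same argument as the paper: unfold the definitions to see that an edge is dominated exactly when one of its endpoints is touched. The paper proves the forward direction by contrapositive (an edge with no touched endpoint is undominated), whereas you state it as a per-edge equivalence and then quantify over all edges; the content is the same.
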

	
	\begin{proof}
		Let $D=(D_V, D_E)$ be a mixed set and $X$ be the set of vertices touched by $D$.
		
		$\Leftarrow$) Suppose $X$ is a VC. For any edge $e$, at least one of its endpoints, say $u$, is in $X$. Thus, either $u \in D_V$ or $u \in V(D_E)$. In the first case, $u$ dominates $e$. In the second, there exists an edge, possibly $e$ itself, incident to $u$ that belongs to $D_E$. Such an edge dominates $e$. Hence, all edges are dominated.
		
		$\Rightarrow$) Suppose $X$ is not a VC. Let $e$ be an edge not covered by $X$. By definition of $X$, no endpoints of $e$ belong to $D_V$. Similarly, no endpoints of $e$ are endpoints of an edge in $D_E$. Therefore, $e$ is not dominated by either a vertex nor an edge, a contradiction.
	\end{proof}
	
	Throughout the paper, we will study the MWMDS problem for threshold graphs.
	
	\begin{definition}[Threshold Graph]
		A graph $G = (V, E)$ is a threshold graph if its vertex set $V$ can be partitioned into a clique $K=\{k_1, \dots, k_p\}$ and an independent set $I= \{i_1, \dots, i_q\}$ such that $N(i_1) \subseteq \dots \subseteq N(i_q)$ and $N[k_1]  \subseteq \dots \subseteq N[k_p]$. 
	\end{definition}
    
	We give an example of a threshold graph in Figure~\ref{fig:threshold_graph}.
	We use $N_I(x)=N(x)\cap I$ to denote the neighborhood of $x$ in $I$.
	
	\begin{figure}[!ht]
		\centering
		\begin{tikzpicture}[scale=0.9, every node/.style={circle, draw, minimum size=0.5cm, inner sep=0pt}]
			\node[fill=blue!20] (k1) at (-2, 1.7) {$k_1$};
			\node[fill=blue!20] (k2) at (-.3, 4) {$k_2$};
			\node[fill=blue!20] (k4) at (.3, 0) {$k_4$};
			\node[fill=blue!20] (k3) at (2, 2.3) {$k_3$};
			
			\node[fill=green!20] (i1) at (7, 4) {$i_3$};
			\node[fill=green!20] (i2) at (7, 2) {$i_2$};
			\node[fill=green!20] (i3) at (7, 0){$i_1$};
			
			\draw[thick] (k1) -- (k2); 
			\draw[thick] (k1) -- (k3); 
			\draw[thick] (k1) -- (k4);
			\draw[thick] (k2) -- (k3); 
			\draw[thick] (k2) -- (k4); 
			\draw[thick] (k3) -- (k4);
			
			\draw[thick] (i1) -- (k4); 
			\draw[thick] (i1) -- (k2); 
			\draw[thick] (i1) -- (k3);
			\draw[thick] (i2) -- (k4); 
			\draw[thick] (i2) -- (k3); 
			\draw[thick] (i3) -- (k4);
			
			\node[shape=rectangle, draw=none] at (0, 5.25) {\textbf{Clique $K$ of size 4}};
			\node[shape=rectangle, draw=none] at (7, 5.25) {\textbf{Independent set $I$ of size 3}};
		\end{tikzpicture}
		\caption{A threshold graph}
		\label{fig:threshold_graph}
	\end{figure}
	
	For threshold graphs we are able to characterize all inclusion-wise minimal vertex covers. This will prove useful for the proof of correctness of our algorithm.
	\begin{lemma}[Minimal Vertex Covers of Threshold Graph] \label{lem:min_vc_tg}
		All inclusion-wise minimal vertex covers of a threshold graph $G=(K \cup I, E)$ belong to the collection $\mathcal{V}(G) = \{K\} \cup \{N(x) \mid x \in K \}$.
	\end{lemma}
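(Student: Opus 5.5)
We argue directly from the structure of a minimal vertex cover $C$ of $G=(K\cup I,E)$, looking at its complement. The complement $V\setminus C$ is an independent set, because any edge with both endpoints outside $C$ would be uncovered. Since $K$ is a clique, an independent set contains at most one vertex of $K$. So either $K\subseteq C$, or exactly one vertex $x\in K$ lies outside $C$. The nested-neighbourhood property of threshold graphs is not needed here; only the split structure (a clique $K$ and an independent set $I$) is used.

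\emph{Case $K\subseteq C$.} We first check that $K$ is itself a vertex cover. Every edge has at least one endpoint in $K$, since $I$ is independent. Hence $K\subseteq C$ with $K$ a vertex cover, and inclusion-wise minimality of $C$ forces $C=K$.

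\emph{Case $x\in K\setminus C$.} Every edge $xy$ must be covered by $y$, so $N(x)\subseteq C$. We then show that $N(x)$ is already a vertex cover, by a short case check on an edge $uv$:
\begin{itemize}
\item If $x\in\{u,v\}$, the other endpoint is in $N(x)$.
\item If $u,v\in K\setminus\{x\}$, both endpoints are adjacent to $x$, because $K$ is a clique.
\item If $u\in K\setminus\{x\}$ and $v\in I$, then $u\in N(x)$, again because $K$ is a clique.
\end{itemize}
No edge lies inside $I$, so these cases are exhaustive. Thus $N(x)\subseteq C$ is a vertex cover, and minimality gives $C=N(x)$.

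Combining the two cases, every minimal vertex cover belongs to $\mathcal{V}(G)=\{K\}\cup\{N(x)\mid x\in K\}$. The only step needing any care is the case check that $N(x)$ is a vertex cover; I expect no real obstacle there beyond not forgetting the edges inside $K$.
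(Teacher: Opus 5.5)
Your proof is correct and follows the paper's route: at most one vertex of $K$ can lie outside $C$, and you split on whether $K\subseteq C$ (forcing $C=K$) or some $x\in K\setminus C$ exists (forcing $N(x)\subseteq C$, hence $C=N(x)$). You are slightly more careful than the paper, since you explicitly check that $N(x)$ is a vertex cover and use the minimality of $C$ to get equality, steps the paper leaves implicit while adding an unneeded discussion of when $K$ itself is minimal.
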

	\begin{proof}
		Let $C$ be a vertex cover of $G$. Since $K$ is a clique, $C$ must contain at least $|K|-1$ vertices of $K$ to cover $E(K)$.
		
		If there exists $x \in K$ such that $x\notin C$, then all edges of $\delta(x)$ must be covered by vertices in $N(x)$. In particular, all the vertices of $N_I(x)$ belong to $C$. Therefore, $N(x)$ is a minimal vertex cover.

		If $K\subseteq C$, since $K$ covers all internal edges and all cross edges, $K$ itself is a vertex cover. If all vertices of $K$ are adjacent to at least one vertex of $I$, then this vertex cover is minimal.
        Otherwise, for each $k\in K$ that is adjacent only to vertices of $K$, the set $K\setminus\{k\}=N(k)$ is a minimal vertex cover.
        In every case, these sets belong to the collection $\mathcal{V}(G)$.
	\end{proof}
	
	\subsection{Arbitrary Weights}
	
	In the following section, we suppose that the weight function $w$ takes nonnegative values. This hypothesis allows us to heavily simplify the proof of our results. However, in this section, we prove that this hypothesis is nonrestrictive.

	Given an instance of the MWMDS for a graph $G=(V,E)$, the weight of every element $x \in V \cup E$ can be decomposed as $w(x) = w^+(x) + w^-(x)$, where $w^+(x) = \max(w(x), 0)$ and $w^-(x) = \min(w(x), 0)$. 
	
	Let $V^- = \{v \in V \mid w(v) < 0\}$ and $E^- = \{e \in E \mid w(e) < 0\}$ denote the sets of strictly negative-weight vertices and edges respectively. Let $W^-$ be the sum of all negative weights in the graph, that is:
	$$W^- = \sum_{v \in V^-} w(v) + \sum_{e \in E^-} w(e).$$

	\begin{lemma} \label{lem:weight_lower_bound}
		For any mixed dominating set $D = D_V \cup D_E$ of graph $G$, its total weight under $w$ satisfies $w(D) \ge w^+(D) + W^-$.
	\end{lemma}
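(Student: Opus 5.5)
The plan is a direct term-by-term comparison that uses only the decomposition $w = w^+ + w^-$ and the sign of $w^-$; domination plays no role, so the bound holds for every mixed set $D$.

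First, I use linearity of the weight function over the elements of $D$. For each $x \in D_V \cup D_E$ we have $w(x) = w^+(x) + w^-(x)$. Summing gives
\[
w(D) = \sum_{x \in D} w^+(x) + \sum_{x \in D} w^-(x) = w^+(D) + w^-(D),
\]
where $w^+(D)$ and $w^-(D)$ are defined in the natural way. It then suffices to show $w^-(D) \ge W^-$.

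For this, I note that $w^-(x) = \min(w(x),0) \le 0$ for every element $x \in V \cup E$. Moreover, $w^-(x) = w(x)$ when $x \in V^- \cup E^-$, and $w^-(x) = 0$ otherwise. So $W^-$ is exactly $\sum_{x \in V \cup E} w^-(x)$. I then split this sum over elements in $D$ and elements outside $D$:
\[
W^- = w^-(D) + \sum_{x \notin D} w^-(x).
\]
The second sum has only nonpositive terms, so it is $\le 0$. Hence $W^- \le w^-(D)$, and combining with the first step gives $w(D) \ge w^+(D) + W^-$.

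There is no real obstacle. The only point needing care is the bookkeeping identification of $W^-$ with $\sum_{x} w^-(x)$ over all vertices and edges. Elements with $w(x)=0$ contribute $0$ to both sides, and the sets $V^-$ and $E^-$ are disjoint from each other.
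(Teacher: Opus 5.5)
Your proof is correct and follows the paper's argument: the decomposition $w(D)=w^+(D)+w^-(D)$, followed by $w^-(D)\ge W^-$. You spell out the second step more explicitly than the paper does, by writing $W^-=\sum_{x\in V\cup E} w^-(x)$ and discarding the nonpositive terms outside $D$.
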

	\begin{proof}
		By definition of our problem we have
		$$w(D) = \sum_{x \in D} w(x) = \sum_{x \in D} w^+(x) + \sum_{x \in D} w^-(x) = w^+(D) + w^-(D).$$
		Since $D \cap (V^- \cup E^-) \subseteq V^- \cup E^-$, we have that $w^-(D) \ge W^-$ which yields $w(D) \ge w^+(D) + W^-$.
	\end{proof}
	
	\begin{theorem} \label{thm:arbitrary_reduction}
		Let $G=(V,E)$ be a graph and $w$ an arbitrary weight function on the elements of $G$. Let $D^+$ be the optimum of the MWMDS problem on $G$ with weights $w^+$.
		Then, the optimal solution of the MWMDS problem on $G$ with weights $w$ is given by $D=D^+\cup V^-\cup E^-$.
	\end{theorem}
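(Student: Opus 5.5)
The argument has three steps: show that $D$ is a mixed dominating set, compute its weight, and compare it with an arbitrary MDS using Lemma~\ref{lem:weight_lower_bound}.

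\textbf{Feasibility.} Domination is monotone under adding elements, because every element of a superset still dominates what it did before. Since $D^+$ is an MDS of $G$, so is $D=D^+\cup V^-\cup E^-$.

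\textbf{Weight of $D$.} Write $N=V^-\cup E^-$. Since $w=w^++w^-$, we get
\[
w(D)=w^+(D)+w^-(D).
\]
For the negative part, $w^-$ vanishes outside $N$ and $N\subseteq D$, so $w^-(D)=w^-(N)=W^-$. For the positive part, $w^+$ vanishes on $N$, so $w^+(D)=w^+(D^+)$. This step needs care because $D^+$ may already intersect $N$. The union may be non-disjoint, but each element is counted once as a set element, so neither part is double-counted. Hence
\[
w(D)=w^+(D^+)+W^-.
\]

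\textbf{Optimality.} Let $D'$ be any MDS of $G$. By Lemma~\ref{lem:weight_lower_bound}, $w(D')\ge w^+(D')+W^-$. Since $D^+$ is optimal for the weights $w^+$, we have $w^+(D')\ge w^+(D^+)$. Combining these,
\[
w(D')\ \ge\ w^+(D^+)+W^-\ =\ w(D).
\]
So $D$ is optimal for $w$.

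There is no real obstacle in this argument. The only subtle point is the bookkeeping in the weight computation: one must use that $w^+\equiv 0$ on $N$ and $w^-\equiv 0$ off $N$.

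Finally, the reduction costs only $\mathcal{O}(|V|+|E|)$ extra time to compute $w^+$ and to add $N$. The asymptotic running time is therefore unchanged.
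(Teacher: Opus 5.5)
Your proof is correct and follows the paper's argument essentially step for step. Feasibility comes from monotonicity of domination under adding elements, the computation $w(D)=w^+(D^+)+W^-$ uses that $w^+$ vanishes on $V^-\cup E^-$ and $w^-$ vanishes off it, and optimality follows by combining Lemma~\ref{lem:weight_lower_bound} with the $w^+$-optimality of $D^+$.
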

	\begin{proof}
		Since the set $D^+$ is a mixed dominating set for $G$ and the mixed domination property is preserved under element addition, $D$ is a mixed dominating set.
		
		The weight of $D$ is $w(D) = w^+(D) + w^-(D)$. By construction, $D$ contains all negative elements $V^-$ and $E^-$, hence $w^-(D) = W^-$. Additionally, for all $x \in V^- \cup E^-$, the non-negative weight is $w^+(x) = 0$. Adding these elements to $D^+$ does not change the non-negative weight sum, meaning $w^+(D) = w^+(D^+)$. Consequently, the exact weight of $D$ is
		$ w(D) = w^+(D^+) + W^-$.
		
		Now, let $S$ be any mixed dominating set of $G$. Because $D^+$ is the solution minimizing the $w^+$ objective, it must hold that $w^+(D^+) \le w^+(S)$. Applying Lemma~\ref{lem:weight_lower_bound} to $S$, we obtain
		$$ w(D) = w^+(D^+) + W^- \le w^+(S) + W^- \le w(S). $$
		
		Therefore, $D$ is a mixed dominating set of minimum weight for $G$ with weights $w$.
	\end{proof}
	The pre-processing of weights and post-processing union operate in linear time. Therefore, the overall time complexity of the algorithm remains unchanged after these operations. Henceforth, we may assume that all weights are nonnegative.

	\section{Results}

    \subsection{Constrained Mixed Cover Algorithm}
    The Constrained Mixed Cover formulation and its reduction to minimum-weight edge cover were first developed in the ongoing work of Ferrarini, Kober, Lancini, and Yuditsky~\cite{ferrariniKoberLanciniYuditsky}. We recall the formulation and reduction here, with minor modifications, because they form a building block of our threshold-graph algorithm.
    This problem is a variant of the MWMDS where we partition the set of vertices into four subsets:
    \begin{itemize}
        \item a set $S$ of \emph{solution vertices}, that is,  vertices that must belong to the solution,
        \item a set $T$ of \emph{touched vertices}, that is, vertices that must be touched by the solution, and can be part of the solution,
        \item a set $B$ of \emph{banned vertices}, that is, vertices that must be touched by the solution, but cannot belong to the solution,
        \item a set $F$ of \emph{free vertices}, that is, vertices that cannot belong to the solution, and that are not required to be touched by the solution.
    \end{itemize}
    
    \begin{definition}[Constrained Mixed Cover problem]\label{def:CMCP}
        Let $G=(V,E)$ be a graph, let $w$ be a weight function on both vertices and edges of $G$, and let $V$ be partitioned into four sets of vertices $S$, $T$, $B$, and $F$ mutually disjoint. The \emph{Constrained Mixed Cover problem} $\mathcal{H}(S, T, B, F)$ consists of finding a mixed set $(D_V \subseteq V, D_E \subseteq E)$ minimizing $w(D_V) + w(D_E)$ such that: $S\subseteq D_V$, $T \subseteq D_V \cup V(D_E)$, $D_V \cap (B \cup F) = \varnothing$, and $B\subseteq V(D_E)$.

    \end{definition}
	One possible way to solve the MWMDS for a graph $G=(V,E)$ consists of solving a (possibly exponentially large) sequence of constrained mixed cover problems: if we solve $\mathcal{H}(S, T, B, F)$ for all possible partitions of $V$ such that $F$ is a stable set and each element of $F$ is adjacent to an element of $S$, we completely explore the space of solutions to the MWMDS.

    Clearly, this strategy makes sense only under two specific conditions: $\mathcal{H}(S, T, B, F)$ is solvable in polynomial time, and we are able to reduce our search to a polynomial number of subproblems. Fortunately, the first condition holds thanks to a reduction to the edge cover problem, as we proceed to show.
 
	\paragraph{Construction.}

    We now describe a reduction from $\mathcal{H}(S,T,B,F)$ to an instance of the minimum-weight edge-cover problem. To this end, we construct an auxiliary weighted graph $ \bar G=(\bar V,\bar E,\bar w)$.
    This construction first appeared, with some minor changes, in Ferrarini, Kober, Lancini, and Yuditsky~\cite{ferrariniKoberLanciniYuditsky}. 
The construction consists of a copy of the original graph, together with auxiliary vertices and edges used to encode the relevant vertex weights. The vertex set $\bar{V}$ is given by:
$$\bar{V}= \{\bar v : \forall v\in V\} \cup \{v' : \forall v\in V\setminus B\}\cup \{v'' : \forall v\in T\}.$$
Thus, for every vertex $v\in V$, the graph $\bar G$ contains a corresponding vertex $\bar v$. Moreover, for every $v\in V\setminus B$ we introduce an auxiliary vertex $v'$. A second auxiliary vertex $v''$ is introduced only for vertices $v\in T$.

On the other hand, the edge set is defined as follows:
$$ \bar E= \{\bar u\bar v : \forall uv\in E\}\cup\{\bar v v' : \forall v\in V\setminus B\} \cup \{v'v'' : \forall v\in T\}.$$
In particular, every edge $uv\in E$ corresponds to an edge
$\bar u\bar v$. Moreover, each auxiliary vertex $v'$ is adjacent to $\bar v$, and each $v''$ is adjacent only to $v'$.

It remains to define the edge-weight function $\bar{w}$.
For every edge $uv\in E$, we set $\bar{w}(\bar{u}\bar{v})=w(uv)$.
For every $v\in V\setminus B$, we set:
\[ \bar{w}(\bar{v}v') = \begin{cases} w(v)\quad &\text{if } v\in S\cup T,\\
0, & \text{otherwise}. \end{cases} \] 
Finally, for every $v\in  T$, we set $\bar{w}(v'v'')=0$.
This completes the construction of the auxiliary weighted graph $\bar{G}$. It is important to notice that the reduction is polynomial.
	
	\begin{example}\label{ex:auxiliary_graph_weights}
		Consider the graph $G$ given in Figure~\ref{fig:auxiliary_graph}. Let $S=\{v_1\}$, $T=\{v_3\}$, and $B=\{v_2\}$, and $F=\{v_4, v_5\}$. The number next to each vertex and edge is its weight. On the right we have the auxiliary graph $\bar{G}$, whose edges are labeled by their weight $\bar w$. Thus, the edges $\bar{v_i}v'_i$ for $i\in\{1,3,4,5\}$ have weights $2,3,0,0$ respectively, while every edge $v'_iv''_i$ has weight $0$.
		\begin{figure}[ht]
			\begin{center}
				\begin{tikzpicture}[
					scale=1.4,
					vertex/.style={circle, draw, minimum size=0.7cm, inner sep=0pt},
					weight/.style={rectangle, draw=none, fill=none, inner sep=1pt, font=\small},
					graphlabel/.style={rectangle, draw=none, fill=none, inner sep=0pt}
					]
					\node[vertex, fill=orange!20] (g1) at (-4.0, 1.2) {$v_1$};
					\node[vertex, fill=red!30](g2) at (-5.1, 0.4) {$v_2$};
					\node[vertex, fill=teal!20](g3) at (-4.7, -1.0) {$v_3$};
					\node[vertex, fill=blue!20](g4) at (-3.3, -1.0) {$v_4$};
					\node[vertex, fill=blue!20](g5) at (-2.9, 0.4) {$v_5$};
					\draw (g1) -- node[weight, above left, xshift=-1pt, yshift=1pt] {$4$} (g2);
					\draw (g1) -- node[weight, right, xshift=2pt] {$3$} (g4);
					\draw (g1) -- node[weight, above, yshift=1pt] {$6$} (g5);
					\draw (g2) -- node[weight, below, yshift=-1pt] {$2$} (g4);
					\draw (g2) -- node[weight, above, yshift=1pt] {$5$} (g5);
					\draw (g4) -- node[weight, right, xshift=1.5pt] {$1$} (g5);
					\draw (g2) -- node[weight, left, xshift=-1.5pt] {$7$} (g3);
					\draw (g4) -- node[weight, below, yshift=-1pt] {$4$} (g3);
					\node[weight, above=1pt of g1] {$2$};
					\node[weight, left=1pt of g2] {$5$};
					\node[weight, below left=1pt and 0pt of g3] {$3$};
					\node[weight, below right=1pt and 0pt of g4] {$4$};
					\node[weight, right=1pt of g5] {$6$};
					\node[graphlabel] at (-4.0, -1.85) {$G$};
					
					\node[vertex, fill=orange!20](v1) at (2.0, 1.2) {$\bar{v_1}$};
					\node[vertex, fill=red!30](v2) at (0.9, 0.4) {$\bar{v_2}$};
					\node[vertex, fill=teal!20](v3) at (1.3, -1.0) {$\bar{v_3}$};
					\node[vertex, fill=blue!20](v4) at (2.7, -1.0) {$\bar{v_4}$};
					\node[vertex, fill=blue!20](v5) at (3.1, 0.4) {$\bar{v_5}$};
					\draw (v1) -- node[weight, above left, xshift=-1pt, yshift=1pt] {$4$} (v2);
					\draw (v1) -- node[weight, right, xshift=2pt] {$3$} (v4);
					\draw (v1) -- node[weight, above, yshift=1pt] {$6$} (v5);
					\draw (v2) -- node[weight, below, yshift=-1pt] {$2$} (v4);
					\draw (v2) -- node[weight, above, yshift=1pt] {$5$} (v5);
					\draw (v4) -- node[weight, right, xshift=1.5pt] {$1$} (v5);
					\draw (v2) -- node[weight, left, xshift=-1.5pt] {$7$} (v3);
					\draw (v4) -- node[weight, below, yshift=-1pt] {$4$} (v3);
					
					\node[vertex](v1a) at (2.00, 2.10) {$v_1'$};
					\node[vertex](v3a) at (0.35, -1.13) {$v_3'$};
					\node[vertex](v3b) at (0.05, -0.30) {$v_3''$};
					\node[vertex](v4a) at (3.65, -1.13) {$v_4'$};
					\node[vertex](v5a) at (4.00, 0.70) {$v_5'$};
					\draw (v1) -- node[weight, text=black, right, xshift=1pt] {$2$} (v1a);
					\draw (v3) -- node[weight, text=black, below, yshift=-1pt] {$3$} (v3a);
					\draw (v3a) -- node[weight, text=black, left, xshift=-1pt] {$0$} (v3b);
					\draw (v4) -- node[weight, text=black, below, yshift=-1pt] {$0$} (v4a);
					\draw (v5) -- node[weight, text=black, above, yshift=1pt] {$0$} (v5a);
					\node[graphlabel] at (2.0, -1.85) {$\bar{G}$};
				\end{tikzpicture}
				\caption{The weighted graph $G$ and its auxiliary graph $\bar{G}$.}
				\label{fig:auxiliary_graph}
			\end{center}
		\end{figure}
	\end{example}
	
	\begin{proposition} \label{prop:algorithm}
		Given a graph $G=(V,E)$, for any $S, T, B, F$ as in Definition~\ref{def:CMCP}, $\mathcal{H}(S, T, B, F)$ is equivalent to solving the Minimum Weight Edge Cover problem on $\bar{G}$. 
	\end{proposition}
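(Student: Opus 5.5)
The plan is to show that feasible solutions of $\mathcal{H}(S,T,B,F)$ and edge covers of $\bar G$ correspond to each other without increasing weight in either direction. It then follows that the two optimal values coincide, and that an optimal edge cover of $\bar G$ can be turned back into an optimal solution of $\mathcal{H}(S,T,B,F)$. Throughout I use the standing assumption that $w$ is nonnegative. Hence $\bar w$ is nonnegative, and dropping edges of weight zero or more never increases cost.

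\emph{From a constrained mixed cover to an edge cover.} Let $(D_V,D_E)$ be feasible for $\mathcal{H}(S,T,B,F)$. Since $D_V\cap(B\cup F)=\varnothing$, we have $D_V\subseteq S\cup T$. Define $C\subseteq\bar E$ as the union of the following four sets:
\begin{itemize}
\item the copies $\bar u\bar v$ of all edges $uv\in D_E$;
\item the pendant edges $\bar v v'$ for $v\in D_V$;
\item the edges $\bar v v'$ for $v\in F$, each of weight $0$;
\item the edges $v'v''$ for all $v\in T$, each of weight $0$.
\end{itemize}
By construction $\bar w(C)=w(D_E)+w(D_V)$.

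I then check that $C$ covers every vertex of $\bar G$, case by case.
\begin{itemize}
\item For $v\in S$: $v\in D_V$, so $\bar v$ and $v'$ are covered by $\bar v v'$.
\item For $v\in T$: $v'$ and $v''$ are covered by $v'v''$. The vertex $\bar v$ is covered either by $\bar v v'$ (if $v\in D_V$) or by the copy of an edge of $D_E$ (since $T\subseteq D_V\cup V(D_E)$).
\item For $v\in B$: $\bar v$ is covered because $B\subseteq V(D_E)$.
\item For $v\in F$: both $\bar v$ and $v'$ are covered by $\bar v v'$.
\end{itemize}
This shows that the optimum of the edge cover problem on $\bar G$ is at most the optimum of $\mathcal{H}(S,T,B,F)$.

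\emph{From an edge cover to a constrained mixed cover.} Conversely, let $C$ be an edge cover of $\bar G$. Set $D_E=\{uv\in E:\bar u\bar v\in C\}$ and $D_V=\{v\in S\cup T:\bar v v'\in C\}$. Clearly $D_V\cap(B\cup F)=\varnothing$. For $v\in S$, the vertex $v'$ has $\bar v$ as its only neighbour, so $\bar v v'\in C$ and therefore $S\subseteq D_V$. For $v\in B$ there is no pendant edge at $\bar v$, so $\bar v$ is covered by a copy of an original edge; hence $B\subseteq V(D_E)$. For $v\in T$, the vertex $\bar v$ is covered either by $\bar v v'$, which puts $v\in D_V$, or by the copy of an original edge, which puts $v\in V(D_E)$. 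For the weight, $w(D_V)+w(D_E)$ equals $\bar w(C)$ minus the weights of the discarded edges. These discarded edges are the edges $\bar v v'$ with $v\in F$ and the edges $v'v''$, all of weight $0$. So the weights are in fact equal, and the reverse inequality between the optima holds.

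There is no deep obstacle here. The points that need care are the case analysis on $S,T,B,F$ and making sure each auxiliary vertex $v'$ and $v''$ is covered in the first direction. This is exactly why the edges $v'v''$ of weight $0$ and the edges $\bar v v'$ of weight $0$ for $v\in F$ are included. Combining the two directions gives the claimed equivalence, and each map is computable in linear time.
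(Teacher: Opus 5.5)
Your proof is correct and follows the paper's argument. The paper uses the same two weight-preserving maps: copies of $D_E$, plus the pendant edges $\bar v v'$ for $v\in D_V$, plus zero-weight auxiliary edges in one direction, and $D_E=\{uv\in E:\bar u\bar v\in C\}$, $D_V=\{v\in S\cup T:\bar v v'\in C\}$ in the other. The differences are cosmetic: you add only the needed zero-weight edges ($\bar v v'$ for $v\in F$ and $v'v''$ for $v\in T$) where the paper adds every zero-weight edge of $\bar G$, and since both your maps preserve weight exactly, the nonnegativity assumption you invoke is not actually used.
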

	
	\begin{proof}
		The proof is divided into two parts. We first prove that for any $D$ solution to $\mathcal{H}(S, T, B, F)$ there exists an edge cover of $\bar{G}$ of the same total weight. Then, we show that for every edge cover of $\bar{G}$, there exists a mixed set $D$ of same total weight that respects the conditions on the vertex partition.
		
		Let $D=(D_V, D_E)$ be a mixed set that touches $T$, contains $S$, and has no vertex in $B\cup F$.
		Let $E_0$ be the set of edges of $\bar{G}$ defined as:
        $$E_0=\{\bar{e} \in \bar{E} : e\in D_E\}.$$
        Let $E_1\subseteq\bar{E}$ be defined as:
		$$\{\bar{v}v'  \in \bar{E}: v\in D_V\}\cup\{\bar{e}\in \bar{E} :\bar{w}(\bar{e})=0\}.
		$$
		We claim that $\bar D=E_0\cup E_1$ is an edge cover of $\bar{G}$ such that $w(D)=\bar{w}(\bar{D})$.
		The equivalence of weights stems from the fact that the edges of $G$ have the same weights of their counterparts in $\bar{G}$, and the fact that $\bar{w}(\bar{v}v')=w(v)$ for all $v\in S\cup T$.
		We claim that the set of edges $\bar D$ is an edge cover of $\bar G$. 
        For every $v\in D_V$, the edge $\bar{v}v'$ belongs to $\bar{D}$, in particular, for all $v\in S$, the vertices $\bar{v}$ and $v'$ are covered by $\bar{v}v'$. For every $v\in (T\cup B)\setminus D_V$ there exists an edge $e$ in $\delta(v)\cap D$, and the corresponding edge $\bar{e}$ belongs to $\bar{D}$, therefore for all $v\in S\cup T\cup B$, $\bar{v}$ is covered. For all $v\in T$, the vertices $v'$ and $v''$ of $\bar{G}$ are covered by $v'v''$ since this edge has weight $0$, similarly, for all $v\in F$ the vertex $v'$ is covered by the edge $\bar{v}v'$ that has weight 0. Therefore, $\bar{D}$ is an edge cover.
		
		We now prove that, given an edge cover $\bar{D}$ of $\bar{G}$, there exists a mixed set $D\subseteq V\cup E$ of the same weight that is solution to $\mathcal{H}(S,T,B,F)$. 
        
		Let $V_D=\{v\in S\cup T : \bar{v}v'\in \bar{D}\}$, and $E_D=\{e\in E: \bar{e}\in \bar{D} \}$. The set $D^*=(V_D, E_D)$ is solution to $\mathcal{H}(S,T,B,F)$ of the same weight as $\bar{D}$.
        By construction, for all vertices $v\in S$, the edge $\bar{v}v'$ belongs to $\bar{D}$.
        For each $v\in T$, either $\bar{v}v'$ belongs to $\bar{D}$ or there exists a vertex $\bar{u}$ such that $\bar{u}\bar{v}$ belongs to $\bar{D}$. In the first case $v\in V_D$, in the second, $v$ is touched by the edge $uv\in E_D$.
        Furthermore, for all $v\in B$, there exists a vertex $\bar{u}$ such that $\bar{u}\bar{v}$ belongs to $\bar{D}$, the corresponding edge belongs to $D^*$, and therefore $D^*$ touches all vertices of $B$ via an edge. By construction the vertices of $B$ and $F$ do not belong to $V_D$.		
		Since for all $e\in E$, we have $w(e)=\bar{w}(\bar{e})$, and for all $v\in S\cup T$ we have $w(v)=\bar{w}(\bar{v}v')$, we can conclude that $w(D^*)=\bar{w}(\bar{D})$.    \end{proof}
	\begin{corollary}\label{cor:HO3}
		For a graph $G=(V,E)$, solving the constrained mixed cover problem can be done in $\mathcal{O}(n^3)$ time.
	\end{corollary}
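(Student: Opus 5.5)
The plan is to combine Proposition~\ref{prop:algorithm} with a known polynomial algorithm for the minimum-weight edge cover problem, and then to bound the size of the auxiliary graph $\bar G$.

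\textbf{Step 1: size of $\bar G$.} We first bound the size of the instance produced by the construction. By definition,
\[
|\bar V| = |V| + |V\setminus B| + |T| \le 3n
\qquad\text{and}\qquad
|\bar E| = |E| + |V\setminus B| + |T| \le m + 2n .
\]
So $\bar G$ has $\mathcal{O}(n)$ vertices and $\mathcal{O}(n^2)$ edges. Building it, including the weight function $\bar w$, takes $\mathcal{O}(n+m)$ time, since each vertex and edge of $G$ is handled a constant number of times.

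\textbf{Step 2: solve the edge cover instance.} We invoke the classical reduction of minimum-weight edge cover to minimum-weight perfect matching. Take two copies of $\bar G$ and join each vertex to its copy by an edge of weight $2\mu(v)$, where $\mu(v)$ is the minimum weight of an edge incident to $v$. A minimum-weight perfect matching in this doubled graph, which has $\mathcal{O}(n)$ vertices and $\mathcal{O}(n^2)$ edges, yields a minimum-weight edge cover.

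We could equivalently reduce to a maximum-weight matching with reduced weights $\mu(u)+\mu(v)-\bar w(uv)$: a maximum matching under these weights, extended by the cheapest incident edge at every uncovered vertex, is optimal.

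Gabow's implementation of Edmonds' weighted matching algorithm runs in $\mathcal{O}(N(M+N\log N))$ time on a graph with $N$ vertices and $M$ edges, and the classical implementation runs in $\mathcal{O}(N^3)$. With $N=\mathcal{O}(n)$ and $M=\mathcal{O}(n^2)$, both give $\mathcal{O}(n^3)$. All weights are nonnegative by the discussion following Theorem~\ref{thm:arbitrary_reduction}, so the weighted edge cover reduction applies directly.

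\textbf{Step 3: translate back.} Proposition~\ref{prop:algorithm} maps a cost-preserving edge cover $\bar D$ back to the mixed set $(V_D,E_D)$ in linear time. Because both directions of the correspondence preserve weight, an optimal cover of $\bar G$ gives an optimal solution to $\mathcal{H}(S,T,B,F)$.

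One degenerate case needs handling. If $\bar G$ has an isolated vertex, no edge cover exists. This happens when some $v\in B$ has no neighbours, which cannot occur since $G$ is connected with at least two vertices. In that case the constrained problem is correctly reported infeasible.

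\textbf{Main obstacle.} The only real difficulty is justifying the $\mathcal{O}(n^3)$ bound for weighted edge cover, i.e.\ checking that the matching reduction keeps the graph size linear in vertices and quadratic in edges. This holds as argued above. Everything else is bookkeeping.
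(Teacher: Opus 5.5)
Your proof is correct and follows the paper's argument. The paper bounds the size of $\bar G$ (at most $2n$ extra vertices and edges) and cites Schrijver for an $\mathcal{O}(n^3)$ minimum-weight edge cover algorithm; you additionally unpack that citation through the standard doubling reduction to minimum-weight perfect matching and Edmonds/Gabow, and you correctly note that this reduction needs the weights $\bar w$ to be nonnegative, which the paper leaves implicit.
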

	\begin{proof}
		The minimum weight edge cover is solvable in $\mathcal{O}(n^3)$ for a general graph~\cite{schrijver2003combinatorial}. The transformation from $G$ to $\bar{G}$ adds at most $2n$ vertices and edges to the graph, thus it is polynomial. Consequently, we can solve $\mathcal{H}(S,T,B,F)$ in $\mathcal{O}(n^3)$ time.
	\end{proof}
	

	\subsection{Polynomial-Time Algorithm for Threshold Graphs}
	
	\begin{theorem} \label{thm:main}
		The MWMDS problem on a threshold graph $G=(V,E)$ can be solved in $\mathcal{O}(n^5)$ time.
	\end{theorem}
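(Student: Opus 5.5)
The plan is to reduce the MWMDS on a threshold graph to $\mathcal{O}(n^2)$ instances of the constrained mixed cover problem. By Corollary~\ref{cor:HO3} each instance costs $\mathcal{O}(n^3)$, which gives the $\mathcal{O}(n^5)$ bound.

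Let $D=(D_V,D_E)$ be an optimal mixed dominating set, which we may take inclusion-wise minimal since weights are nonnegative. Let $X$ be the set of vertices touched by $D$. By Lemma~\ref{lem:edge_dom}, $X$ is a vertex cover. By Lemma~\ref{lem:min_vc_tg}, $X$ contains some $C\in\mathcal{V}(G)=\{K\}\cup\{N(x)\mid x\in K\}$, so there are at most $|K|+1\le n+1$ choices of $C$. A vertex outside $X$ is neither in $D_V$ nor incident to $D_E$, so it must be dominated by a neighbour in $D_V$. The set $V\setminus X$ is also independent, since $X$ is a vertex cover.

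The key step is to guess, besides $C$, a single witness vertex $s\in D_V$ (or ``no vertex'') responsible for the untouched vertices. Then we set up the instance $\mathcal{H}(S,T,B,F)$ with:
\begin{itemize}
\item $S=\{s\}$;
\item $F=V\setminus (C\cup N[s])$ removed, i.e.\ $F\subseteq N(s)\setminus C$;
\item $T$ containing $C\setminus\{s\}$ together with every vertex that must be touched.
\end{itemize}
Concretely, for $C=N(x)$ the only vertices outside $C$ are $x$ and $I\setminus N_I(x)$. The vertices of $I\setminus N_I(x)$ have neighbourhoods contained in $N(x)\cap K$ by the nested structure, and $x$ is adjacent to all of $C\cap K$. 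I would argue from nestedness that if some untouched vertices are dominated by vertices of $D_V$, they are all dominated by the single vertex $k_p$ (or by the maximal-degree vertex of $D_V$), because $N(i_1)\subseteq\dots\subseteq N(i_q)$ and $N[k_1]\subseteq\dots\subseteq N[k_p]$. So the witness can be chosen as the element of $D_V$ with largest closed neighbourhood, and then $N[s]\supseteq N[v]$ for every other $v\in D_V$.

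With $C$ and $s$ fixed, we need to choose $F$. The natural choice is $F=N(s)\setminus C\setminus\{s\}$ made free, all other vertices in $T$, and $S=\{s\}$. Optimality of this choice needs checking: allowing a vertex of $F$ to be touched does no harm, since nonnegative weights and the minimum-over-guesses absorb it. Forbidding a vertex of $F$ from the solution is justified because in a minimal $D$ an untouched vertex is not in $D_V$; vertices of $F$ that happen to be touched are handled by a different guess of $C$ with larger touched set, or are equally allowed via edges. To make this robust, I will instead put vertices of $N(s)\setminus C$ in $F$ only if they lie in $V\setminus X$. Since $V\setminus X\subseteq N(s)\setminus C$ and the pair $(C,s)$ ranges over all options, some guess $(C',s)$ with $C'=X$-compatible exists. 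If needed I will refine the guess to $X\supseteq C$ by noting that $V\setminus C$ is independent with nested neighbourhoods, so $X\setminus C$ can be taken a prefix/suffix in the nested order; that adds one more index, which I must avoid to keep $\mathcal{O}(n^2)$ guesses.

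Correctness has two directions. Every feasible solution of each $\mathcal{H}$ instance is an MDS: $T\cup B$ is touched and contains a vertex cover, so edges are dominated by Lemma~\ref{lem:edge_dom}, and $F\subseteq N(s)$ with $s\in S$. Conversely, the optimal $D$ is feasible for some guessed instance. The algorithm outputs the minimum over the $\mathcal{O}(n)\cdot\mathcal{O}(n)$ guesses, plus the pure vertex-cover and edge-cover cases covered by $s=$ none. The main obstacle is the rigidity of $F$: we must show that when $X\supsetneq C$, declaring the extra touched vertices as $T$ rather than $F$ never loses optimality. I would handle it by replacing $C$ with the set $X$ itself, arguing via nestedness that in an optimal minimal solution $V\setminus X$ equals $N(s)\setminus C$ intersected with a nested suffix determined by $s$. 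This step is the delicate one.
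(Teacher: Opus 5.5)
\textbf{Verdict: genuine gap.} Your approach can be completed, but as written it leaves the converse (completeness) direction open, and your choice of witness is justified by a false claim.

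Your soundness direction is fine. For any $s$ and any $C\in\mathcal{V}(G)$, a feasible solution of $\mathcal{H}(\{s\},V\setminus(F\cup\{s\}),\varnothing,F)$ with $F=N(s)\setminus(C\cup\{s\})$ touches $C$ and dominates $F$ through $s$, so it is an MDS.

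The gap is the converse, which is the real content of the theorem and which you explicitly leave unresolved. For an optimal minimal $D$ to be feasible you need $V\setminus X\subseteq F$ and, crucially, $D_V\cap F=\varnothing$. Your justification (``handled by a different guess of $C$ with larger touched set, or are equally allowed via edges'') does not address this. Vertices of $F$ touched by edges are already allowed; the problem is vertices of $F$ lying in $D_V$, which $\mathcal{H}$ forbids. Your proposed repair (replacing $C$ by $X$, or guessing a nested suffix) amounts to guessing $X$. You yourself note this costs an extra index, and it is never carried out.

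Your witness is also not justified as stated. The closed neighbourhoods of a clique vertex and a non-adjacent independent vertex are incomparable, and those of two independent vertices are never nested. So ``$N[s]\supseteq N[v]$ for every $v\in D_V$'' is false. What you actually need is only $V\setminus X\subseteq N(s)$. This holds if you take $s$ to be the vertex of $D_V\cap K$ with the largest neighbourhood when $D_V\cap K\neq\varnothing$, and otherwise the vertex of $D_V\subseteq I$ with the largest open neighbourhood. That is precisely the paper's case split on $D_V$.

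The missing idea is the exchange argument from inclusion-wise minimality that the paper uses in its items 5 and 7:
\begin{enumerate}
\item Since $V\setminus X$ is independent, at most one clique vertex $k'$ is untouched. Take $C=K$ if there is none and $C=N(k')$ otherwise, so that $C\subseteq X$.
\item When $s\in K$, the set $F$ then consists of $k'$ (untouched, hence not in $D_V$) and vertices $j\in N_I(s)\setminus C$.
\item If such a $j$ were in $D_V$, removing it would still leave an MDS, contradicting minimality:
\begin{enumerate}
\item $j$ is still dominated by $s$;
\item the touched set still contains the vertex cover $C$, since $j\notin C$, so all edges remain dominated by Lemma~\ref{lem:edge_dom};
\item every neighbour of $j$ is either touched or equal to $k'\in N(s)$.
\end{enumerate}
\item When $D_V\subseteq I$ the condition is trivial, because $F\subseteq N(s)\subseteq K$ and $D_V\cap K=\varnothing$.
\end{enumerate}

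With this lemma, your uniform family of $\mathcal{O}(n^2)$ instances (plus the edge-cover instance for $D_V=\varnothing$) does work. It is a tidier packaging of the paper's seven classes. The paper additionally bans vertices such as $K$, $I_i^+$ and $B_k$ that a maximal witness may be assumed not to use, which is harmless but unnecessary. Without this lemma, your proposal does not establish the theorem.
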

	\begin{proof}
		Let $G=(K\cup I, E)$ be a threshold graph, and let $w$ be a non-negative weight function.
		We distinguish three cases based on $D_V$.
		
		\paragraph{Case 1: $D_V = \varnothing$.} 
		Suppose that the mixed dominating set does not contain vertices, then we solve the following problem.
        \begin{enumerate}
            \item Solve $\mathcal{H}(\varnothing, \varnothing, V, \varnothing)$.
        \end{enumerate}
		
		\paragraph{Case 2: $D_V \subseteq I$ and $D_V \neq \varnothing$.} 
		Suppose that $D_V$ is not empty and contains only vertices in $I$.
        For each $i \in I$, we denote by $I_i^-$ the set of vertices $\{j\in I : N(j)\subseteq N(i), j\neq i\}$. Similarly, we denote by $I_i^+$ the set $I\setminus (I_i^-\cup \{i\})$.
		We solve the following sequence of constrained mixed cover problems.
		\begin{enumerate}[resume]
			\item For each $i$ in $I$, solve $\mathcal{H}(\{i\}, I\setminus \{i\}, K, \varnothing )$.
			\item For each $i\in I$ and each $k'\in N(i)$, solve $\mathcal{H}(\{i\}, I_i^-, I_i^+\cup K\setminus \{k'\}, \{k'\})$.
		\end{enumerate}
        \paragraph{Case 3: $D_V \cap K\neq \varnothing$.} 
        For every $k\in K$ define the set $B_k$ as
		$$B_k=\{v\in K : N[k]\subsetneq N[v]\}.$$
        We solve the following sequence of constrained mixed cover problems.
		\begin{enumerate}[resume]
			\item For each $k \in K$, solve $\mathcal{H}(\{k\}, V\setminus \{k\}, \varnothing, \varnothing )$.
            \item For each $k \in K$, solve  $\mathcal{H}(\{k\}, V\setminus (B_k\cup \{k\}\cup N_I(k)), B_k, N_I(k) )$.
			\item For each $k\in K$, for each $k'\in B_k$ solve $\mathcal{H}(\{k\}, V\setminus (B_k \cup \{k\}), B_k\setminus \{k'\},  \{k'\})$.
            \item For each $k\in K$, for each $k'\in K\setminus(B_k\cup\{k\})$, solve $\mathcal{H}(\{k\}, V\setminus(B_k\cup \{k, k'\} \cup (N_I(k)\setminus N_I(k')))  , B_k ,  \{k'\}\cup(N_I(k)\setminus N_I(k')))$.
		\end{enumerate}

        We claim that these subproblems all give a mixed dominating set as an optimal solution, and that every inclusion-wise minimal mixed dominating set is solution to at least one of these problems. 

    \begin{claim}
        Every feasible solution to each of the problems listed in items 1.-7.~is a mixed dominating set.
    \end{claim}
    \begin{proof}
        Problems in classes 1, 2 and 4 require all vertices of the graph to be touched by the mixed set, therefore every solution to these problems is trivially a mixed dominating set.

        Problems in classes 3 and 6 require all but one vertex to be touched. In particular, the only possibly untouched vertex $k'$ belongs to the clique and has a neighbor in the solution. This has two consequences: first, all vertices are either touched by the solution or they are dominated by a vertex, and second, the set of touched vertices contains $N(k')$, that is a vertex cover by Lemma~\ref{lem:min_vc_tg}. Thus, all edges are dominated by Lemma~\ref{lem:edge_dom}, and so the solutions are all mixed dominating sets.

        Problems in class 5 follow a similar principle. The set of possibly untouched vertices for this problem is $N_I(k)$ for some $k\in K$ that belongs to the solution.  Therefore, all vertices are either touched or dominated by $k$. In particular, all vertices of $K$ are touched, so all the edges are dominated by Lemmas~\ref{lem:edge_dom} and~\ref{lem:min_vc_tg}.

        For problems in class 7, the set of possibly untouched vertices is  $\{k'\}\cup(N_I(k)\setminus N_I(k'))\subseteq N(k)$ for $k,k' \in K$, where $k$ belongs to the solution. In this case, all vertices are either touched by the solution or dominated by $k$. 
        Moreover, all vertices of $N(k')$ are touched, therefore the set of touched vertices contains a vertex cover by Lemma~\ref{lem:min_vc_tg}, and thus all edges are dominated by Lemma~\ref{lem:edge_dom}. \end{proof}
        
    \begin{claim}
        Every inclusion-wise minimal mixed dominating set on $G$ is a feasible solution to at least one of the problems listed in items 1.-7. 
    \end{claim}      
\begin{proof}
    Let $D=(D_V,D_E)$ be an inclusion-wise minimal mixed dominating set of $G$.
    We follow the three cases of the proof.
    
    \paragraph{Case 1} If $D_V=\varnothing$, $D$ is an edge cover, and thus it is solution to $\mathcal{H}(\varnothing, \varnothing, V, \varnothing)$.
    
    \paragraph{Case 2} Suppose that $\varnothing\neq D_V\subseteq I$, then if all vertices of $G$ are touched by $D$, and $i\in I\cap D_V$, then $D$ is solution to $\mathcal{H}(\{i\}, I\setminus \{i\}, K, \varnothing )$.
    On the contrary,  if there are untouched vertices they all belong to $K$, because the vertices of $I$ can be dominated only by vertices in $K$ (that do not belong to the solution by hypothesis). Since two untouched vertices cannot be adjacent, we have that there exists at most one of them, say $k'$, that must be adjacent to a vertex of $I\cap D_V$, as otherwise it is not dominated. Let $i$ be the vertex of $I\cap D_V$ with the maximum neighborhood, that is, no vertex of $I^+_i$ belongs to $D$. Therefore $D$ is solution to $\mathcal{H}(\{i\}, I_i^-, I_i^+\cup K\setminus \{k'\}, \{k'\})$. 

    \paragraph{Case 3} Suppose that $D_V\cap K\neq \varnothing$.
    We denote by $k$ the vertex of $D_V\cap K$ of maximum neighborhood.
    
    If all vertices of $G$ are touched by $D$, then $D$ is solution to $\mathcal{H}(\{k\}, V\setminus \{k\}, \varnothing, \varnothing )$.
    
    If there exist untouched vertices, they must form a stable set. Therefore, at most one vertex of $K$ is untouched.

    If all vertices of $K$ are touched, then the set of untouched vertices is a subset of $N_I(k)$. Moreover, since $D$ is inclusion-wise minimal by hypothesis, no vertices of $N_I(k)$ belong to $D_V$: they and their neighbors are all dominated by $k$, similarly all edges between them and $K$ are dominated by whatever touches the other endpoint. This implies that, in this subcase, $D$ is solution to  $\mathcal{H}(\{k\}, V\setminus (B_k\cup \{k\}\cup N_I(k)), B_k, N_I(k) )$.

    Therefore, suppose that there exists a vertex $k'\in K$ not touched by $D$. Since $G$ is a threshold graph, either $N[k]\subsetneq N[k']$ or $N[k']\subseteq N[k]$.
    In the first case, that is $k'\in B_k$, we have that the vertices of $I$ are either adjacent to $k'$ or not dominated by $k$. Since $k$ is the vertex of $D_V\cap K$ of maximum neighborhood, this implies that the vertices not dominated by $k$ are not dominated by any vertex, and thus they must be touched. On the other hand, if they are adjacent to $k'$, they must be touched since the set of untouched vertices is a stable set. Consequently, in this case $D$ is solution to $\mathcal{H}(\{k\}, V\setminus (B_k \cup \{k\}), B_k\setminus \{k'\},  \{k'\})$.
    In the second case, we have that the set of vertices that are adjacent to $k$ and not adjacent to $k'$ are not required to be touched, as they are dominated by $k$ and all their neighbors are touched. Moreover, since $D$ is inclusion-wise minimal by hypothesis, we can deduce that these vertices do not belong to $D$. In this case $D$ is solution to $\mathcal{H}(\{k\}, V\setminus(B_k\cup \{k, k'\} \cup (N_I(k)\setminus N_I(k')))  , B_k ,  \{k'\}\cup(N_I(k)\setminus N_I(k')))$.

    We extensively treated all possible cases, each case yields a solution to one of the problems above, thus we proved the claim.\end{proof}

    Since $w$ is a nonnegative function, there exists an optimal solution to the $MWMDS$ problem on $G$ that is inclusion-wise minimal.
    Therefore, once we solve all possible problems of the list above, we obtain a set of possible optimal solutions. 
    Taking the best is sufficient to obtain the optimum.

    The proof is concluded once we realize that each item of the list above consists in solving at most $|V|^2$ constrained mixed cover problems. Each of those is solvable in $\mathcal{O}(|V|^3)$ time by Corollary~\ref{cor:HO3}. Thus, the overall complexity of the problem is $\mathcal{O}(|V|^5)$.\end{proof}
	
	\bibliographystyle{plain}
	\bibliography{references}
\end{document}